\newtheorem{theorem}{Theorem}
\newtheorem{lemma}[theorem]{Lemma}
\newcommand{\ie}{{\it i.e.}\xspace}
\newcommand{\card}[1]{\ensuremath{|{#1}|}\xspace}
\newcommand{\simplerandom}{\textsc{Random\-Outer}\xspace}
\newcommand{\simplesorted}{\textsc{Sorted\-Outer}\xspace}
\newcommand{\stupid}{\textsc{Dynamic\-Outer}\xspace}
\newcommand{\stupidthreshold}{\textsc{Dynamic\-Outer\-2Phases}\xspace}
\newcommand{\simplerandommat}{\textsc{Random\-Matrix}\xspace}
\newcommand{\simplesortedmat}{\textsc{Sorted\-Matrix}\xspace}
\newcommand{\stupidmat}{\textsc{Dynamic\-Matrix}\xspace}
\newcommand{\stupidthresholdmat}{\textsc{Dynamic\-Matrix\-2Phases}\xspace}
\newcommand{\mapreduce}{MapReduce\xspace}
\newcommand{\ema}[1]{\ensuremath{#1}\xspace}
\newcommand{\dx}{\ema{\delta x}}
\begin{document}

\title{Analysis of Dynamic Scheduling Strategies for Matrix Multiplication on Heterogeneous Platforms}

\author{Olivier Beaumont\footnote{Inria \& University of Bordeaux (\protect\url{olivier.beaumont@inria.fr})} 
  \and 
  Loris Marchal \footnote{LIP (CNRS, INRIA, ENS-Lyon, Univ. of Lyon), (\protect\url{loris.marchal@ens-lyon.fr})}}

\maketitle

\begin{abstract} 
  The tremendous increase in the size and heterogeneity of
  supercomputers makes it very difficult to predict the performance of
  a scheduling algorithm. Therefore, dynamic solutions, where
  scheduling decisions are made at runtime have overpassed static
  allocation strategies. The simplicity and efficiency of dynamic
  schedulers such as Hadoop are a key of the success of the MapReduce
  framework. Dynamic schedulers such as StarPU, PaRSEC or
  StarSs are also developed for more constrained computations, e.g. 
  task graphs coming from linear algebra. To make their decisions,
  these runtime systems make use of some static information, such as
  the distance of tasks to the critical path or the affinity between
  tasks and computing resources (CPU, GPU,\ldots) and of dynamic
  information, such as where input data are actually located. In this
  paper, we concentrate on two elementary linear algebra kernels, namely the
  outer product and the matrix multiplication. For each problem, we
  propose several dynamic strategies that can be used at runtime and
  we provide an analytic study of their theoretical performance. We
  prove that the theoretical analysis provides very good estimate of
  the amount of communications induced by a dynamic strategy and can
be used in order to efficiently determine thresholds used in dynamic scheduler, thus
  enabling to choose among them for a given problem and architecture.
\end{abstract}

\section{Introduction}
\label{intro}

Recently, there has been a very important change in both parallel
platforms and parallel applications. On the one hand, computing
platforms, either clouds or supercomputers involve more and more
computing resources. This scale change poses many problems, mostly
related to unpredictability and failures. 
Due to the size of the platforms, their complex
network topologies, the use of heterogeneous resources, NUMA effects,
the number of concurrent simultaneous computations and communications,
it is impossible to predict exactly the time that a specific task will
take. Unpredictability makes it impossible to statically allocate the
tasks of a DAG onto the processing resources and dynamic scheduling and allocation strategies
are needed. As a consequence, in recent years, there has been a large
amount of practical work to develop efficient runtime schedulers. The main
characteristics of these schedulers is that they make their decisions
at runtime, based on the expected duration of the tasks on the
different kind of processing units (CPUs, GPUs,...) and on the
expected availability time of the task input data, given their actual
locations. Thanks to these information, the scheduler allocates the
task to the resource that will finish its processing as soon as possible.
Moreover, all these runtime systems also make use of some
static information that can be computed from the task graph itself, in
order to decide the priority between several ready tasks. This
information mostly deals with the estimated critical path as proposed
in HEFT~\cite{heft} for instance.

On the other hand, there has been a dramatic simplification of the
application model in many cases, as asserted by the success of the
\mapreduce framework~\cite{dean2008mapreduce} which has been
popularized by Google. It allows users without particular knowledge in
parallel algorithms to harness the power of large parallel machines. In
\mapreduce, a large computation is broken into small tasks that run in
parallel on multiple machines, and scales easily to very large
clusters of inexpensive commodity computers. \mapreduce is a very
successful example of dynamic schedulers, as one of its crucial
feature is its inherent capability of handling hardware failures and
processing capabilities heterogeneity, thus hiding this complexity to
the programmer, by relying on on-demand allocations and the on-line detection of
nodes that perform poorly (in order to re-assign tasks that slow down
the process). As we explained in a previous work~\cite{nofreelunch},
\mapreduce, although tailored for linear complexity operations (such as
text parsing), is now widely used for non linear complexity tasks. In
this case, it induces a large replication of the data. For example,
when \mapreduce is used to compute the outer product of two vectors
$a$ and $b$, the most common technique is to emit all possible pairs
of $(a_i,b_j)$, so that many processors can be used to compute the
elementary products. This induces a large replication factor, since
\mapreduce is not aware of the 2-dimensional nature of the data.



Our goal in this paper is to show how simple data-aware dynamic
schedulers can be proven efficient in a specific context. We
concentrate here on two elementary kernels, namely the outer product
and the matrix multiplication. These kernels do not induce
dependencies among their tasks, but because of their massive input data reuse results,
a straightforward \mapreduce implementations of these
kernels would involve a large replication overhead.  Indeed, in both
cases~\cite{nofreelunch}, input vectors or input matrices need to be
replicated when the kernel is processed by a large-scale parallel
platform, and basic dynamic strategies that allocate tasks at random
to processors fail to achieve reasonable communication volumes with
respect to known lower bounds.

In the present paper, we first present and study a very simple yet
efficient dynamic scheduler for the outer
product, that generates a communication volume close to the lower
bound. Our main contribution is to analyze the communication volume
generated by the dynamic scheduler as a continuous process that can be
modeled by an Ordinary Differential Equation (ODE). We prove that the
analytic communication volume of the solution of the ODE is close to
the actual communication volume as measured using
simulations. Moreover, we prove that this analysis of the solution of
the ODE can be used in order to optimize a dynamic randomized allocation strategy, for
instance, by switching between two strategies when the number of
remaining tasks is smaller than a given threshold, that is determined by the theoretical analysis. This simple example
attests the practical interest of the theoretical analysis of dynamic
schedulers, since it shows that the analytic solution can be used in
order to incorporate static knowledge into the scheduler.  After
presenting our method on the outer product (Section~\ref{outer}),
we move to a more common kernel, the matrix multiplication and show how the
previous analysis can be extended in Section~\ref{matrix_mult}.

\section{Related work}

We briefly review previous works related to our study, which deals both
with actual runtime schedulers and with their theoretical studies.

\subsection{Runtime dynamic schedulers}

As mentioned in the introduction, several runtime systems have been
recently proposed to schedule applications on parallel systems. Among
other successful projects, we may cite StarPU~\cite{starpu}, from INRIA
Bordeaux (France), DAGuE and PaRSEC~\cite{dague,parsec} from ICL,
Univ. of Tennessee Knoxville (USA) StarSs~\cite{starss} from Barcelona
Supercomputing Center (Spain) or KAAPI~\cite{kaapi} from INRIA Grenoble
(France). Most of these tools enable, to a certain extent, to schedule
an application described as a task graph (usually available in the
beginning of the computation, but sometimes generated and discovered
during the execution itself), onto a parallel platforms. Most of these tools
allow to harness complex platforms, such as multicores and hybrid
platforms, including GPUs or other accelerators. These runtime systems
usually keep track of the occupation of each computing devices and
allocate new tasks on the processing unit that is expected to minimize
its completion time. Our goal in this paper in to provide an analysis
of such dynamic schedulers for simple operations, that do not involve tasks dependencies but massive data reuse.

\subsection{Theoretical studies of dynamic systems}

Many studies have proposed to use queuing
theory~\cite{queueing-theory-book} to study the behavior of simple
parallel systems and their dynamic evolution. Among many others, Berten
et al.~\cite{BertenG07} propose to use such stochastic models in order to model
computing Grids, and Mitzenmacher~\cite{Mitzenmacher00oldinfo} studies
how not-to-date information can lead to bad scheduling decisions in a
simple parallel system.

Recently, mean field techniques~\cite{gast2012mean,benaim2008class}
have been proposed for analyzing such dynamic processes. They give a
formal framework to derive a system of ordinary differential equations
that is the limit of a Markovian system when the number of objects
goes to infinity. Such techniques have been used for the first time
in~\cite{Mitzenmacher1998loadstealing} where the author derives
differential equations for a system of homogeneous processors who
steal a single job when idle.


\section{Randomized dynamic strategies for the outer-product}
\label{outer}

We present here the analysis of a dynamic scheduler for a simple
problem from linear algebra, namely the outer-product of two vectors.

\subsection{Problem definition}

We consider the problem of computing the outer-product $a b^t$ of two
large vectors $a$ and $b$ of size $N$, \ie to compute all values $a_i
\times b_j, \forall 1 \leq i,j \leq N$. The computing domain can
therefore be seen as a matrix. For granularity reasons, we will
consider that $a$ and $b$ are in fact split into $N/l$ blocks of size
$l$ and that a basic operation consists in computing the outer product
of two (small) vectors of size $l$.

As stated above, we target heterogeneous platforms consisting of $p$
processors $P_1,\ldots,P_p$, where the speed of processor $P_i$, \ie
the number of outer products of size $l$ vectors that $P_k$ can do in
one time unit, is given by $s_k$. We will also denote by
$\mathit{rs}_k$ the relative speed of $\mathit{rs}_k =
\frac{s_k}{\sum_i s_i}$. Note that the randomized strategies that we
propose are agnostic to processor speeds, but they are demand driven,
so that a processor with a twice larger speed will request work twice
faster.

In the following, we assume that a master processor coordinates the
work distribution: it is aware of which $a$ and $b$ blocks are
replicated on the computing nodes and decides which new blocks are
sent, as well as which tasks are allocated to the
nodes. After completion of their allocated tasks, computing nodes
simply report to the master processor, requesting for new tasks.

We will assume throughout the analysis that it is possible to overlap
computations and communications. This can be achieved with dynamic
strategies by uploading a few blocks in advance at the beginning of
the computations and then to request work as soon as the number of
blocks to be processed becomes smaller than a given
threshold. Determining this threshold would require to introduce a
communication model and a topology, what is out of the scope of this
paper, and we will assume that the threshold is known. In practice,
the number of tasks required to ensure a good overlap has been observed to be small
in~\cite{kreaseck2003autonomous,parashar2006autonomic} even though a
rigorous algorithm to estimate it is still missing.

As we observed~\cite{nofreelunch}, performing a non linear complexity
task such as a Divisible Load or a MapReduce operation requires to
replicate initial data. Our objective is to minimize the overall
amount of communications, \ie the total amount of data (the number of
blocks of $a$ and $b$) sent by the master node initially holding the data,
or equivalently by the set of devices holding the data since we are
interested in the overall volume only, under the constraint that a
perfect load-balancing should be achieved among resources allocated to
the outer product computation. Indeed, due to data dependencies, if we
were to minimize communications without this load-balancing
constraint, the optimal (but very inefficient) solution would consist in
making use of a single computing resource so that each data block would be
sent exactly once.

\subsection{Design of randomized dynamic strategies}

As mentioned above, vectors $a$ and $b$ are split into $N/l$ data
blocks. In the following, we denote by $a_i$ the $i$th block of $a$
(rather than the $i$th element of $a$) since we always consider
elements by blocks. As soon as a processor has received two data
blocks $a_i$ and $b_j$, it can compute the block $M_{i,j}=(a
b^t)_{i,j}=a_i b_j^t$. This elementary task is denoted
by $T_{i,j}$. All data blocks are initially available at the master
node only.

One of the simplest strategy to allocate computational tasks to
processors is to distribute tasks at random: whenever a processor is
ready, a task $T_{i,j}$ is chosen uniformly at random among all
available tasks and is allocated to the processor. The data
corresponding to this task that is not yet on the processor, that is
one or two of the $a_i$ and $b_j$ blocks are sent by the master.  We
denote this strategy by \simplerandom.
Another simple option is to allocate tasks in
lexicographical order of indices $(i,j)$ rather than randomly. This strategy will be denoted as
\simplesorted.



Both previous algorithms are expected to induce a large amount of
communications because of data replication. Indeed, in these
algorithms, there is no reason why the data sent for the processing of
tasks on a given processor $P_k$ may be re-used for upcoming
tasks. This is why dynamic data-aware strategies have been
introduced. In the runtime systems cited above, such as StarPU, the
scheduler is aware of the locality of the data and uses this
information when allocating tasks to processors: it is much more
beneficial, when allocating a new task on $P_k$, to take advantage of
the $a$ and $b$ data already present on the processor, and to compute
for example all possible products $a_i b_{j'}^t$ before sending new
blocks of data. We propose such a strategy, denoted \stupid, in
Algorithm~\ref{algo.stupid}: when a processor $P_k$ receives a new
pair of blocks $(a_i, b_j)$, all possible products $a_i b_{j'}^t$ and
$a_{i'} b_j^t$ are also allocated to $P_k$, for all data blocks
$a_{i'}$ and $b_{j'}$ that have already been transmitted to $P_k$ in
previous steps.

\begin{algorithm}[htbp]
  \DontPrintSemicolon
  \While{there are unprocessed tasks}{    
    Wait for a processor $P_k$ to finish its tasks\;
    $I \gets \{i\textnormal{~such that~}P_k\textnormal{~owns~}a_i\}$\;
    $J \gets \{j\textnormal{~such that~}P_k\textnormal{~owns~}b_j\}$\;
    Choose $i\notin I$ and $j\notin J$ uniformly at random\;
    Send $a_i$ and $b_j$ to $P_k$\;
    Allocate all tasks of $\{T_{i,j} \} $ $\cup$ $\{T_{i,j'}, j'\in
    J\}$ $ 
    \cup $ $ \{ T_{i',j}, i'\in I\}$ that are not yet processed to $P_k$
    and mark them processed\;
  }  
  \caption{\stupid strategy.}
  \label{algo.stupid}
\end{algorithm}

Note that the \stupid scheduler is not computationally expensive: it
is sufficient to maintain a set of unknown $a$ and $b$ data (of size
$O(N/l)$) for each processor, and to randomly pick an element of this
set when allocating new blocks to a processor $P_k$. 

\begin{figure}[htbp]
  \centering
  \includegraphics[width=0.7\linewidth]{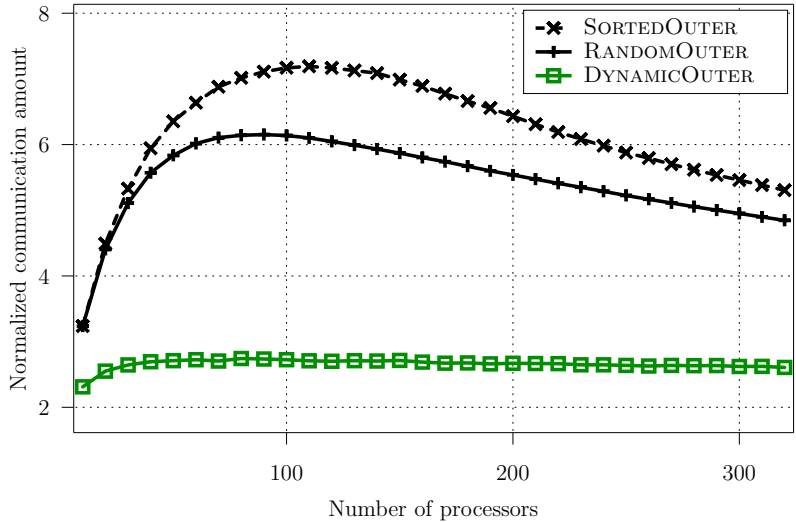}
  \caption{Comparison of random and data-aware dynamic strategies, for
    vectors of size $N/l=100$ blocks}
  \label{fig.rand_vs_data_aware}
\end{figure}

We have compared the performance of previous schedulers through
simulations on Figure~\ref{fig.rand_vs_data_aware}. Processor speeds
are chosen uniformly in the interval $[10, 100]$, which means a large
degree of heterogeneity. Each point in this figure and the following
ones is the average over 10 or more simulations. The standard
deviation is always very small, typically smaller than 0.1 for any
point, and never impacts the ranking of the strategies. It is thus not
depicted for clarity reasons. All communication amounts are normalized
with the following lower bound:
$$
LB=2 N \sum_k\sqrt{\mathit{rs}_k} = 2 N \sum_k\sqrt{\frac{s_k}{\sum_i s_i}},
$$ 
where $s_k$ is the speed of processor $P_k$ and $\mathit{rs}_k$ its
relative speed.

Indeed, in a very optimistic setting, each
processor is dedicated to computing a ``square'' area of $M= a b^t$,
whose area is proportional to its relative speed, so that all processors finish their work at the same instant. In this situation,
the amount of communications for $P_k$ is proportional to the half perimeter of this
square of area $N^2\mathit{rs}_k$. Note that this lower bound is not
expected to be achievable (consider for instance the case of 2 heterogeneous processors). The best known static algorithm (based on a complete knowledge of all relative speeds) has an
approximation ratio of $7/4$~\cite{algorithmica02}. This algorithm
computes an allocation scheme based on the computing speeds of the
processors. As outlined in
the introduction, such an allocation mechanism is not practical in our
context, since our aim is to rely on more dynamic runtime strategies, but can be used as a  comparison basis.

As expected, we notice on Figure~\ref{fig.rand_vs_data_aware} that
data-aware strategies induce significantly less communication than
purely random strategies.

Our \stupid allocation scheme suffers some limitation: when the number
of remaining blocks to compute is small, the proposed strategy is
inefficient as it may send a large number of $a$ and $b$ blocks to a
processor $P_k$ before it is able to process one of the last few
available tasks. Thus, we propose an improved version
\stupidthreshold in Algorithm~\ref{algo.stupidthreshold}: when the
number of remaining tasks becomes smaller than a given threshold, we switch to
the basic randomized strategy: any available task $T_{i,j}$ is allocated to a
requesting processor, without taking data locality into account. The
corresponding data $a_i$ and $b_j$  are then sent to $P_k$ if needed.

\begin{algorithm}[htbp]
  \DontPrintSemicolon
  \While{the number of processors is larger than the threshold}{    
    Wait for a processor $P_k$ to finish its tasks\;
    $I \gets \{i\textnormal{~such that~}P_k\textnormal{~owns~}a_i\}$\;
    $J \gets \{j\textnormal{~such that~}P_k\textnormal{~owns~}b_j\}$\;
    Choose $i\notin I$ and $j\notin J$ uniformly at random\;
    Send $a_i$ and $b_j$ to $P_k$\;
    Allocate all tasks of $\{T_{i,j} \} $ $\cup$ $\{T_{i,j'}, j'\in
    J\}$ $ 
    \cup $ $ \{ T_{i',j}, i'\in I\}$ that are not yet processed to $P_k$
    and mark them processed\;
  }  
  \While{there are unprocessed tasks}{    
    Wait for a processor $P_k$ to finish its tasks\;
    Choose randomly an unprocessed task $T_{i,j}$\;
    \lIf{$P_k$ does not hold $a_i$}{send $a_i$ to $P_k$}
    \lIf{$P_k$ does not hold $b_j$}{send $b_j$ to $P_k$}
    Allocate $T_{i,j}$ to $P_k$\;
  }
  \caption{\stupidthreshold strategy.}
  \label{algo.stupidthreshold}
\end{algorithm}

As illustrated on Figure~\ref{fig.threshold}, for a well chosen number
of tasks processed in the second phase, this new strategy allows to
reduce further the amount of communications. However, this requires to
accurately set the threshold, depending on the size of the matrix and
the relative speed of the processors. If too many tasks are processed
in the second phase, the performance is close to the one of
\simplerandom. On the contrary, if too few tasks are processed in the
second phase, the behavior becomes close to \stupid. The optimal
threshold corresponds here to a few percent of tasks being processed
in the second phase. In the following, we present an analysis of the
\stupidthreshold strategy that both allows to predict its performance
and to optimally set the threshold, so as to minimize the amount of
communications.

\begin{figure}[htbp]
  \centering
  \includegraphics[width=0.7\linewidth]{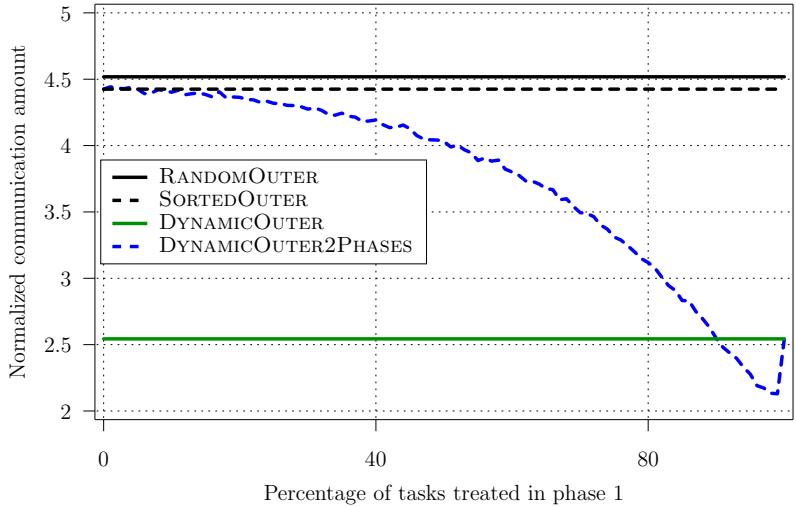}
  \caption{Communication amount of \stupidthreshold and comparison to
    the other schedulers for different thresholds (for a given
    distribution of computing speeds with 20 processors and
    $N/l=100$).}
  \label{fig.threshold}
\end{figure}

\subsection{Theoretical analysis of dynamic randomized strategies} 
\label{outer_analysis}

In this section, our aim is to provide an analytical model for
Algorithm \stupidthreshold. Analyzing such a strategy is crucial in order to
assess the efficiency of runtime dynamic strategies and in order to
tune the parameters of dynamic strategies or to choose among different
strategies depending on input parameters.

In what follows, we assume that $N$, the size of vectors $a$ and
$b$, is large and we consider a continuous dynamic process whose
behavior is expected to be close to the one of \stupidthreshold. In what
follows, we concentrate on processor $P_k$ whose speed is $s_k$.
At each step, \stupidthreshold chooses to send one data block of $a$ and one
data block of $b$, so that $P_k$ knows the same number of data blocks
of $a$ and $b$. As previously, we denote by $M=a b^t$ the result of
the outer product and by $T_{i,j}$ the tasks that corresponds to the
product of data blocks $a_i$ and $b_j$


We denote by $x=y/N$ the ratio of elements of $a$ and $b$ that are
known by $P_k$ at a given time step of the process and by
$t_k(x)$ the corresponding time step. We concentrate on a basic step
of \stupidthreshold during which the fraction of data blocks of both $a$ and
$b$ known by $P_k$ goes from $x$ to $x + \dx$. In fact, since \stupidthreshold
is a discrete process and the ratio known by $P_k$ goes from $x=y/N$
to $x+l/N = y/N + l/N$. Under the assumption that $N$ is large, we
assume that we can approximate the randomized discrete process by the
continuous process described by the corresponding Ordinary
Differential Equation on expected values. The proof of convergence is
out of the scope of this paper but we will show that this assumption
provides very good results through simulations in Section~\ref{simuouter}.

\begin{figure}[htbp]
  \centering
  \includegraphics[width=0.4\linewidth]{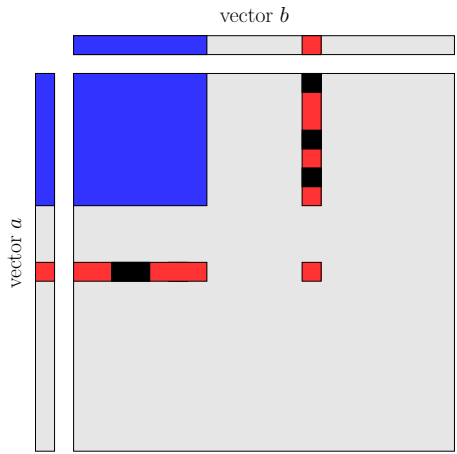}
  \caption{Illustration for the proof of Lemma~\ref{lemg}. The
    top-left blue  rectangle represents the data owned by
  the processor at time $t_k(x)$ (a permutation of the rows and
  columns has been applied to have it in the upper left corner). The new elements \dx are depicted in red, as well as the
  corresponding available tasks. Note that some tasks (in black)
  corresponding to the combination of \dx with the known elements have
already been processed by other processors.}
  \label{fig.evolv}
\end{figure}

Let us remark that during the execution of \stupidthreshold, tasks $T_{i,j}$
are greedily computed as soon as a processor knows the corresponding
data blocks of $a_i$ and $b_j$.  Therefore, at time $t_k(x)$, all
tasks $T_{i,j}$ such that $P_k$ knows data blocks $a_i$ and $b_j$ have
been processed and there are $x^2 N^2 / l^2$ such tasks. Note also
that those tasks may have been processed either by $P_k$ or by another
processor $P_j$ since processors compete to process tasks. Indeed,
since data blocks of $a$ and $b$ are possibly replicated on
several processors, then both $P_k$ and $P_\ell$ may know at some point
both $a_i$ and $b_j$. In practice, the processor which computes
$T_{i,j}$ is the one that learns both $a_i$ and $b_j$ first.

Figure~\ref{fig.evolv} depicts the computational domain during the
first phase of \stupidthreshold from the point of view of a given
processor $P_k$ (rows and columns have been reordered for the sake of
clarity). The top-left square (in blue) corresponds to value of $a$
and $b$ that are known by $P_k$, and all corresponding tasks have
already been processed (either by $P_k$ or by another processor). The
remaining ``L''-shaped area (in grey) corresponds to tasks $T_{i,j}$
such that $P_k$ does not hold either the corresponding value of $a$,
or the corresponding value of $b$, or both. When receiving a new value
of $a$ and $b$ (in red), $P_k$ is able to process all the tasks (in
red) from the two corresponding row and column. Some elements from
this row and this column may be already processed (in black).

In what follows, we denote by $g_k(x)$ the fraction of tasks $T_{i,j}$
in the previously described ``L''-shaped area that have not been
computed yet. We also assume that the distribution of unprocessed
tasks in this area is uniform, and we claim that this assumption is
valid for a reasonably large number of processors. Our simulations
below show that this leads to a very good accuracy.

Based on this remark, we are able to prove the following Lemma
\begin{lemma}
\label{lemg}
$g_k(x)=(1-x^2)^{\alpha_k}$, where $\alpha_k = \frac{\sum_{i \neq k} s_i}{s_k}$.
\end{lemma}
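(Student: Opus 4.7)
The plan is to derive $g_k$ as the solution of a first-order ordinary differential equation in $x$, obtained from a balance between the unprocessed tasks consumed by $P_k$ itself and those consumed by the other processors during one elementary step of \stupidthreshold.

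The starting observation is that, at any time during the first phase, the set of globally unprocessed tasks coincides with the unprocessed part of $P_k$'s L-region. Indeed, as the excerpt already notes, every cell in $P_k$'s top-left square has necessarily been computed (either by $P_k$ itself or by some other processor that learned both corresponding blocks first), so no unprocessed task can sit in that square. Writing $n = N/l$, this yields the identity $u_k = g_k(x)(1-x^2) n^2$ for the global number of unprocessed tasks at time $t_k(x)$, valid from any processor's viewpoint.

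Next I would quantify an elementary step in which $P_k$ advances from $x$ to $x + 1/n$. When $P_k$ receives its new pair of blocks, the corresponding row and column intersect the L-region in $2xn + 1$ cells. By the uniform-distribution hypothesis stated just before the lemma, a fraction $g_k(x)$ of those cells are still unprocessed and get computed by $P_k$, which takes real time $\Delta t = g_k(x)(2xn+1)/s_k$. During that same $\Delta t$, the remaining processors jointly consume $(S - s_k)\Delta t = \alpha_k g_k(x)(2xn+1)$ further unprocessed tasks, and by the preliminary identity all of these also lie inside $P_k$'s L-region. Summing both contributions gives the step increment
\[
\Delta u_k = -(1+\alpha_k)\, g_k(x)\,(2xn+1).
\]

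Finally, equating this with the differential of $u_k = g_k(x)(1-x^2)n^2$ and letting $n \to \infty$ produces the ODE $g_k'(x)(1-x^2) = -2\alpha_k x\, g_k(x)$, which I would rewrite as $(\ln g_k)'(x) = \alpha_k \bigl(\ln(1-x^2)\bigr)'$ and integrate, using the initial condition $g_k(0) = 1$, to obtain the claimed formula $g_k(x) = (1-x^2)^{\alpha_k}$. The ODE itself is elementary; the only delicate point is the uniform-distribution hypothesis underlying the count $g_k(x)(2xn+1)$, which the paper does not derive from the discrete dynamics but rather validates a posteriori through the simulations of Section~\ref{simuouter}.
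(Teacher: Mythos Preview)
Your proposal is correct and follows essentially the same route as the paper: both derive the balance equation for the unprocessed tasks in the L-region by counting what $P_k$ consumes in one step and what the other processors consume during the same real time, obtain the same first-order ODE $g_k'(x)/g_k(x) = -2\alpha_k x/(1-x^2)$, and integrate using $g_k(0)=1$. Your explicit remark that the globally unprocessed set coincides with the unprocessed part of $P_k$'s L-region is exactly the (implicit) justification the paper uses when subtracting the other processors' work from that region.
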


\begin{proof}
Let us consider the tasks that have been computed by all processors
between $t_k(x)$ and $t_k(x+\dx)$. As depicted on
Figure~\ref{fig.evolv}, these tasks can be split into two sets.
\begin{itemize}
\item The first set of tasks consists in those that can be newly
  processed by $P_k$ between $t_k(x)$ and $t_k(x+\dx)$. $P_k$ has the
  possibility to combine the $\dx N$ new elements of $a$ with the $x
  N$ already known elements of $b$ (and to combine the $\dx N$ new
  elements of $b$ with the $x N$ already known elements of $a$). There
  is therefore a total of $2~ x~ \dx~ N^2$ such tasks (at first
  order). Among those, by definition of $g$, the expected number of
  tasks that have not already been processed by other processors is
  given by $2 ~x~ \dx ~g(x)~ N^2$. Therefore, the expected duration of
  this step is given by $t_k(x+\dx) - t_k(x) = \frac{2 ~x ~\dx ~g(x)~
    N^2}{s_k}$.

\item The second set of tasks consists in those computed by other
  processors $P_i,~i \neq k$. Our assumption states
  that we are able to overlap communications by computations (by
  uploading data blocks slightly in advance), so that processors
  $P_i,~i \neq k$ will keep processing tasks between $t_k(x)$ and
  $t_k(x+\dx)$ and will process on expectation $2 ~x~ \dx~ g(x)~
  N^2 \frac{\sum_{i \neq k} s_i }{s_k}$ tasks.

\end{itemize}
Therefore, we are able to estimate how many tasks will be processed
between $t_k(x)$ and $t_k(x+\dx)$ and therefore to compute the
evolution (on expectation) of $g_k$. More specifically, we have

\begin{multline*}
  g_k(x+\dx) ~ \left(1 - (x+\dx)^2 \right) N^2 = \\
  g_k(x) ~ (1 - x^2 ) ~N^2 - 2 ~x ~\dx~ g(x)~ N^2 - 2 ~x~ \dx ~g(x)~ N^2 \frac{\sum_{i \neq k} s_i }{s_k},
\end{multline*}
which gives at first order
$$
g_k(x+\dx) - g_k(x) = g_k(x) ~ \dx ~\frac{- 2 ~x ~\alpha_k}{1-x^2},
$$ where $\alpha_k = \frac{\sum_{i \neq k} s_i}{s_k}$.

Therefore, the evolution of $g_k$ with $x$ is given by the following
ordinary differential equation
$$\frac{g_k'(x)}{g_k(x)} = \frac{- 2 ~x~ \alpha_k}{1-x^2}$$
where both left and right terms are of the form $f'/f$, what leads to
$$\ln (g_k(x)) =  \alpha_k \ln (1-x^2) + K$$ and finally to
$$ g_k(x) =  \exp(K) (1-x^2)^{\alpha_k}, $$
where $exp(K)=1$ since $g_k(0)=1$. This achieves the proof of
Lemma~\ref{lemg}. 
\end{proof}
\medskip

Remember that $t_k(x)$ denotes the time necessary for $P_k$ to know
$x$ elements of $a$ and $b$. Then,

\begin{lemma} 
  \label{lemT} $t_k(x) \sum_i s_i =N^2 (1-  (1-x^2)^{\alpha_k+1})$. 
\end{lemma}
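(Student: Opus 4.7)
The plan is to leverage the infinitesimal increment relation established inside the proof of Lemma~\ref{lemg}, namely
$$t_k(x+\dx) - t_k(x) = \frac{2\, x\, \dx\, g_k(x)\, N^2}{s_k},$$
which, combined with the closed form $g_k(x) = (1-x^2)^{\alpha_k}$, gives an ODE for $t_k(x)$ that can be integrated in closed form.

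First, I would rewrite the increment relation as the differential equation
$$t_k'(x) = \frac{2\, x\, (1-x^2)^{\alpha_k}\, N^2}{s_k},$$
with initial condition $t_k(0)=0$ (the processor starts at time $0$ knowing no blocks). Then I would perform the substitution $v = 1-u^2$, $dv = -2u\,du$, to obtain
$$t_k(x) = \frac{N^2}{s_k} \int_0^x 2u(1-u^2)^{\alpha_k}\, du = \frac{N^2}{s_k} \int_{1-x^2}^{1} v^{\alpha_k}\, dv = \frac{N^2}{s_k}\cdot \frac{1-(1-x^2)^{\alpha_k+1}}{\alpha_k+1}.$$

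Finally, I would use the identity
$$\alpha_k+1 = \frac{\sum_{i\neq k} s_i}{s_k} + 1 = \frac{\sum_i s_i}{s_k},$$
which cancels the $s_k$ in the denominator and yields exactly
$$t_k(x)\sum_i s_i = N^2\bigl(1 - (1-x^2)^{\alpha_k+1}\bigr).$$

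I don't anticipate any real obstacle: the lemma is essentially a corollary of Lemma~\ref{lemg}, since the increment formula was already derived there. The only subtlety worth a brief mention is that this is a statement about expected completion times under the continuous (ODE) approximation of the discrete randomized process, so the proof inherits the same modeling assumptions (overlap of communications with computations, uniform distribution of unprocessed tasks in the $L$-shaped region) that were used to derive Lemma~\ref{lemg}.
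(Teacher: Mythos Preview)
Your proof is correct, but it takes a noticeably more direct route than the paper's own argument. The paper does \emph{not} integrate the time-increment relation $t_k(x+\dx)-t_k(x)=\dfrac{2x\,\dx\,g_k(x)N^2}{s_k}$ directly. Instead, it introduces an auxiliary quantity $h_k(x)$, the number of tasks in $P_k$'s currently known $xN\times xN$ square that have been processed by \emph{other} processors. It derives the ODE $h_k'(x)=N^2\bigl(2x-2x(1-x^2)^{\alpha_k}\bigr)$, integrates it, and then invokes the balance equation $x^2N^2=h_k(x)+t_k(x)s_k$ (total tasks in the square $=$ tasks done by others $+$ tasks done by $P_k$) to solve for $t_k(x)$.

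The two derivations are of course equivalent: differentiating the balance equation gives $t_k'(x)s_k=2xN^2-h_k'(x)=2xN^2g_k(x)$, which is exactly your starting ODE. Your path is shorter and avoids the auxiliary function altogether; the paper's path, on the other hand, makes the decomposition ``tasks stolen by others vs.\ tasks done locally'' explicit, which carries some additional intuition about the process. Either way, the modeling assumptions you flag (continuous ODE approximation, uniform distribution in the $L$-shaped region, overlap of communication and computation) are the right caveats, and they are indeed inherited from Lemma~\ref{lemg}.
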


\begin{proof}
  We have seen that some of the tasks that could have been processed
  by $P_k$ (tasks $T_{i,j}$ such that $P_k$ knows both $a_i$ and
  $b_j$) have indeed been processed by other processors. In order to
  prove the lemma, let us denote by $h_k(x)$ the number of such tasks
  at time $t_k(x)$. Then
$$h_k(x+\dx)= h_k(x) + 2~ x~ \dx~ (1 -g_k(x))N^2$$ by definition of $g_k$ so that, using Lemma~\ref{lemg},
$$h_k'(x)=N^2 (2 x - 2 x (1-x^2)^{\alpha_k})$$ and
$$h_k(x)= N^2 (x^2 + \frac{(1-x^2)^{\alpha_k+1}}{\alpha_k+1} + K)$$ and since $h_k(0)=0$,
$$h_k(x)= N^2 (x^2 + \frac{(1-x^2)^{\alpha_k+1}}{\alpha_k+1} - \frac{1}{\alpha_k+1}) .$$

Moreover, at time $t_k(x)$, all the tasks that could have been processed by $P_k$ have
\begin{itemize}
\item either been processed by $P_k$ and there are exactly $t_k(x) s_k$ such tasks since $P_k$ has been  processing all the time in this area,
\item or processed by other processors and there are exactly $h_k(x)$ such tasks by definition of $h_k$.
\end{itemize}
Therefore,
$$x^2 N^2 = h_k(x) + t_k(x) s_k$$
and finally
$$t_k(x) \sum_i s_i =N^2 (1- (1-x^2)^{\alpha_k+1}),$$
which achieves the proof of Lemma~\ref{lemT}.
\end{proof}
\medskip

Above equations well describe the dynamics of \stupidthreshold as long
as it is possible to find blocks of $a$ and $b$ that enable to compute
enough unprocessed tasks. On the other hand, at the end, it is better
to switch to another algorithm, where unprocessed tasks
$T_{i,j}$ are picked up randomly, which possibly requires to send two
blocks $a_i$ and $b_j$. In order to decide when to switch from one
strategy to the other, we introduce an additional parameter $\beta$.

As presented above, a lower bound on the communication volume received
by $P_k$ (if perfect load balancing is achieved) is given by $LB=2 N
\sum_k\sqrt{\mathit{rs}_k}$.  We will switch from the \stupid to
the \simplerandom strategy when the fraction of tasks $x_k^2 N^2$ for
which $P_k$ owns the input data is approximately $\beta$ times what it
would have computed optimally, that is, when $x_k^2$ is close to
$\beta \frac{s_k}{\sum_i s_i} = \beta \mathit{rs}_k$, for a value of
$\beta$ that is to be determined. For the sake of the analysis, it is
important that we globally define the instant at which we switch to
the random strategy, and that it does not depend on the processor
$P_k$. In order to achieve this, we look for $x_k^2$ as $$x_k^2=
(\beta \mathit{rs}_k -\alpha \mathit{rs}_k^2)$$ and we search $\alpha$
such that $t_k(x_k)$ does not depend on $k$ at first order in
$1/\mathit{rs}_k$, where $\mathit{rs}_k$ is of order $1/p$ and $p$
is the number of processors.

\begin{lemma}
\label{lembeta}
If $\alpha=\beta^2/2$, then $$t_k(x_k) \sum_i s_i = N^2 (1 -e^{-\beta} (1+o(\mathit{rs}_k))).$$
\end{lemma}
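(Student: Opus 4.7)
My plan is to substitute the proposed value $x_k^2 = \beta\, \mathit{rs}_k - \alpha\, \mathit{rs}_k^2$ directly into the expression from Lemma~\ref{lemT} and perform a two-term asymptotic expansion in the small parameter $\mathit{rs}_k$. The key observation is that $\alpha_k + 1 = \frac{\sum_{i\neq k} s_i}{s_k} + 1 = \frac{\sum_i s_i}{s_k} = \frac{1}{\mathit{rs}_k}$, so the quantity of interest reduces to $(1-x_k^2)^{1/\mathit{rs}_k}$ with $x_k^2$ itself of order $\mathit{rs}_k$. This is exactly the kind of expression whose limit is an exponential, and the choice of $\alpha$ will be dictated by cancelling the subleading correction.

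Concretely, I would set $u = \mathit{rs}_k$ (small, of order $1/p$) and compute $\frac{1}{u}\ln(1 - \beta u + \alpha u^2)$. Using the Taylor expansion $\ln(1+y) = y - y^2/2 + O(y^3)$ with $y = -\beta u + \alpha u^2 = O(u)$, I would obtain
\begin{equation*}
\ln(1 - \beta u + \alpha u^2) = -\beta u + \alpha u^2 - \tfrac{1}{2}\beta^2 u^2 + O(u^3) = -\beta u + \bigl(\alpha - \tfrac{\beta^2}{2}\bigr) u^2 + O(u^3).
\end{equation*}
Dividing by $u$ yields $-\beta + (\alpha - \beta^2/2)\, u + O(u^2)$. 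The linear correction in $u$ vanishes precisely when $\alpha = \beta^2/2$, leaving $\frac{1}{u}\ln(1-x_k^2) = -\beta + O(u^2)$, and exponentiating gives $(1-x_k^2)^{1/\mathit{rs}_k} = e^{-\beta}\bigl(1 + O(\mathit{rs}_k^2)\bigr) = e^{-\beta}\bigl(1 + o(\mathit{rs}_k)\bigr)$.

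Substituting back into Lemma~\ref{lemT}, namely $t_k(x_k)\sum_i s_i = N^2\bigl(1 - (1-x_k^2)^{\alpha_k+1}\bigr)$, yields the claimed formula $t_k(x_k)\sum_i s_i = N^2\bigl(1 - e^{-\beta}(1 + o(\mathit{rs}_k))\bigr)$, which is independent of $k$ at first order in $\mathit{rs}_k$, as desired.

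I do not anticipate any real technical obstacle: the whole argument is a one-line substitution followed by a standard Taylor expansion. The only subtlety worth flagging is the matching of orders — one must keep the $u^2$ term both in the expansion of $\ln(1+y)$ (because $y^2$ contributes at this order) and in the numerator $\alpha u^2$, since it is exactly the balance between these two that forces the choice $\alpha = \beta^2/2$. This also explains why the statement is phrased with a quadratic correction $\alpha\,\mathit{rs}_k^2$ to $x_k^2$: a linear ansatz $x_k^2 = \beta\,\mathit{rs}_k$ alone would leave an uncancelled $O(\mathit{rs}_k)$ term in $t_k(x_k)$ that would depend on $k$.
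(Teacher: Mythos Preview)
Your proposal is correct and follows essentially the same route as the paper: rewrite $(1-x_k^2)^{\alpha_k+1}$ as $\exp\bigl(\tfrac{1}{\mathit{rs}_k}\ln(1-\beta\,\mathit{rs}_k+\alpha\,\mathit{rs}_k^2)\bigr)$, Taylor-expand the logarithm to second order, and observe that the choice $\alpha=\beta^2/2$ kills the $O(\mathit{rs}_k)$ term in the exponent. Your write-up is in fact slightly more explicit than the paper's, in particular in identifying $\alpha_k+1=1/\mathit{rs}_k$ and in explaining why the quadratic correction in the ansatz for $x_k^2$ is needed.
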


\begin{proof}
  Since $t_k(x_k) \sum_i s_i =N^2 (1- (1-x_k^2)^{\alpha_k+1},$ then
  \begin{eqnarray*}
    t_k(x_k) &=& \frac{N^2}{\sum_i s_i} ( 1 - e^{\displaystyle \frac{1}{\mathit{rs}_k} \ln (1-\beta \mathit{rs}_k + \alpha \mathit{rs}_k^2))}\\
    & = & \frac{N^2}{\sum_i s_i} ( 1 - e^{\displaystyle
      \frac{1}{\mathit{rs}_k} (-\beta \mathit{rs}_k + \alpha
      \mathit{rs}_k^2- (\beta \mathit{rs}_k)^2/2 ))} \\
    &&\hspace{4cm} \textnormal{ (at first order)}\\
 & = & \frac{N^2}{\sum_i s_i} ( 1 - e^{-\beta}(1+o(\mathit{rs}_k))).
  \end{eqnarray*}
  which achieves the proof of Lemma~\ref{lembeta}.
\end{proof}

One remarkable characteristics of the above result is that it does not
depend (at least up to order 2) on $k$ anymore. Otherwise stated, at
time $T =\frac{N^2}{\sum_i s_i} (1 - e^{\beta}),$ each processor $P_k$
has received $\sqrt{(\beta \mathit{rs}_k -\beta^2/2 \mathit{rs}_k^2)
  N^2} = \sqrt{\beta \mathit{rs}_k} (1 -\beta \mathit{rs}_k/4) N$
data, to be compared with the lower bound on communications for
processor $P_k$: $\sqrt{\mathit{rs}_k} N$.

Using both these results, it is possible to derive the ratio between the overall amount of communication induced by the first phase with respect to the lower bound as a function of $\beta$.

\begin{lemma}
\label{lemP1}
Let us denote by ${\cal V}_{\textsc{Phase1}}$ the volume of the
communications induced by Phase 1 and by $LB = 2 N \sum_k \sqrt{\mathit{rs}_k}$ the lower bound for the communications induced by the whole outer product, then 
$$ \frac{{\cal V}_{\textsc{Phase1}}}{LB} \leq  \sqrt{\beta}  + \frac{\beta^{3/2} \sum_i \mathit{rs}_k^{3/2}}{ 4 LB}   \textnormal{ (at first order)}.
$$
\end{lemma}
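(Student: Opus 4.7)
The plan is to reduce the Phase~1 communication volume to a closed-form sum involving the quantities $x_k$ analyzed in Lemma~\ref{lembeta}, and then to Taylor-expand the resulting square roots in the regime where each relative speed $\mathit{rs}_k$ is small (of order $1/p$).

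First, I would observe that every iteration of the first \textbf{while} loop of \stupidthreshold transfers exactly one new block of $a$ and one new block of $b$ to the processor being served. Hence by the time $P_k$ leaves Phase~1, it has received $2x_k N$ elements of data, where $x_k$ is the common fraction of $a$- and $b$-indices owned by $P_k$ at the switching instant chosen in Lemma~\ref{lembeta}. Summing over all processors gives
\[
\mathcal{V}_{\textsc{Phase1}} \;=\; 2N \sum_k x_k .
\]

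Second, I would plug in the expression for $x_k$ coming from Lemma~\ref{lembeta}. With $\alpha = \beta^2/2$, we have $x_k^2 = \beta\,\mathit{rs}_k\bigl(1 - \tfrac{\beta}{2}\mathit{rs}_k\bigr)$, so
\[
x_k \;=\; \sqrt{\beta\,\mathit{rs}_k}\;\sqrt{1 - \tfrac{\beta}{2}\mathit{rs}_k}.
\]
Applying the first-order Taylor expansion $\sqrt{1-u} = 1 - u/2 + O(u^2)$ in the small parameter $u = (\beta/2)\mathit{rs}_k$ then yields
\[
x_k \;=\; \sqrt{\beta\,\mathit{rs}_k} \;-\; \frac{\beta^{3/2}}{4}\,\mathit{rs}_k^{3/2} \;+\; O\!\bigl(\mathit{rs}_k^{5/2}\bigr).
\]

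Third, summing over $k$ and dividing by $LB = 2N\sum_k \sqrt{\mathit{rs}_k}$, the leading contribution collapses to exactly $\sqrt{\beta}$ and the next term becomes a correction proportional to $\beta^{3/2} \sum_k \mathit{rs}_k^{3/2}$ normalised by (a suitable multiple of) $LB$, producing the form of the inequality stated in the lemma.

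The hard part is not the algebra but controlling the Taylor remainder uniformly in $k$: since $\mathit{rs}_k = O(1/p)$, each remainder $O(\mathit{rs}_k^{5/2})$ summed over the $p$ processors contributes at most $O(p^{-3/2})$, which is dominated by the retained $O(p^{-1/2})$ second-order correction. This is precisely the sense in which the bound is claimed ``at first order'', and it is how the $1 + o(\mathit{rs}_k)$ remainder produced by Lemma~\ref{lembeta} propagates cleanly into the communication-volume estimate without spoiling the announced inequality.
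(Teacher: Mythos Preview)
Your proposal is correct and follows essentially the same route as the paper: the paper's entire proof is the one-line remark that ${\cal V}_{\textsc{Phase1}}$ is obtained by summing the per-processor Phase~1 data $\sqrt{\beta\,\mathit{rs}_k}\,(1-\beta\,\mathit{rs}_k/4)\,N$ already derived just before the lemma, which is exactly your $2N\sum_k x_k$ with the Taylor expansion of $x_k$ plugged in. Your additional discussion of the uniform control of the $O(\mathit{rs}_k^{5/2})$ remainder is more careful than anything the paper spells out, but it does not change the argument.
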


\begin{proof} The proof is obtained by replacing ${\cal V}_{\textsc{Phase1}}$ by $\sum_k \sqrt{\beta \mathit{rs}_k} (1 -\beta \mathit{rs}_k/4)  N$.
\end{proof}

Lemma~\ref{lemP1} provides the evaluation of the expected
communication volume induced by the first phase of \stupidthreshold
with respect to the lower bound. In the following, we will establish
a similar result for the second phase in Lemma~\ref{lemP2}.

\begin{lemma}
\label{lemP2}
Let us denote by ${\cal V}_{\textsc{Phase2}}$ the volume of the communications induced by Phase 1 and by $LB = 2 N \sum_k \sqrt{\mathit{rs}_k}$ the lower bound for the communications induced by the whole outer product, then 
$$ \frac{{\cal V}_{\textsc{Phase2}}}{LB} \leq  e^{-\beta} N \frac{1-\sqrt{\beta}  \sum_k \mathit{rs}_k^{3/2}}{\sum_k \mathit{rs}_k^{1/2}}   \textnormal{ (at first order).}
$$
\end{lemma}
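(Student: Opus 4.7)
The plan is to mirror the derivation of Lemma~\ref{lemP1}: first count the tasks still unprocessed at the start of Phase~2, distribute them among the processors according to their relative speeds, compute the expected per-task communication cost at Phase~2, and then sum over the processors and divide by $LB$.

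To start, Lemma~\ref{lembeta} yields that at the switch time $T$ the total number of already processed tasks equals $T \sum_i s_i = N^2(1 - e^{-\beta})$, so exactly $N^2 e^{-\beta}$ tasks remain for Phase~2. Because the second phase is still demand-driven, processor $P_k$ keeps consuming work at a rate proportional to $s_k$ and hence will handle on expectation $\mathit{rs}_k \, e^{-\beta} N^2$ of these remaining tasks.

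Next, I would compute the expected communication triggered when a random unprocessed task $T_{i,j}$ is allocated to $P_k$. The analysis for $a_i$ and $b_j$ are symmetric, so I focus on $a_i$. At the switch $P_k$ owns the rows indexed by $I_k$, with $|I_k| = x_k N$, and the block $I_k \times J_k$ contains only processed tasks. The uniform-density assumption introduced before Lemma~\ref{lemg} then implies that the number of unprocessed tasks with $a_i \in I_k$ equals $g_k(x_k)\, x_k(1-x_k)\, N^2$. Using $g_k(x_k) = (1 - x_k^2)^{\alpha_k}$ together with the identity $(1-x_k^2)^{\alpha_k+1} = e^{-\beta}(1+o(\mathit{rs}_k))$ established in the proof of Lemma~\ref{lembeta}, one gets $g_k(x_k) = e^{-\beta}(1+o(\mathit{rs}_k))/(1-x_k^2)$. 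Dividing by the total number $e^{-\beta} N^2$ of unprocessed tasks gives $\Pr[a_i \in I_k] = x_k + o(\mathit{rs}_k)$, so that the expected number of blocks transmitted per Phase~2 task assigned to $P_k$ equals $2(1 - x_k)$ at first order.

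Combining the two counts produces
$$
{\cal V}_{\textsc{Phase2}} \;=\; \sum_k 2\,\mathit{rs}_k\, e^{-\beta}\, N^2 (1 - x_k).
$$
Substituting the expansion $x_k = \sqrt{\beta \mathit{rs}_k}\,(1 - \beta \mathit{rs}_k/4)$ from the proof of Lemma~\ref{lembeta} and keeping only the first-order terms in $\mathit{rs}_k$ gives ${\cal V}_{\textsc{Phase2}} \leq 2 e^{-\beta} N^2 \bigl(1 - \sqrt{\beta} \sum_k \mathit{rs}_k^{3/2}\bigr)$. Dividing by $LB = 2N \sum_k \sqrt{\mathit{rs}_k}$ then yields the claimed bound. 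The main obstacle is justifying that the L-shape of unprocessed tasks can still be treated as uniformly populated at the very end of Phase~1, so that the conditional probability that a freshly picked Phase~2 task triggers a transmission of $a_i$ to $P_k$ indeed factors into the fractions computed above; once this step is granted (as elsewhere in the paper at the level of expectations), the rest is bookkeeping and first-order expansions parallel to those in Lemmas~\ref{lembeta} and~\ref{lemP1}.
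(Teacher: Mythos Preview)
Your proposal is correct and essentially follows the same approach as the paper. The only presentational difference is that the paper splits the per-task cost into the two cases ``one block needed'' (probability $\frac{2x_k}{1+x_k}$) versus ``two blocks needed'' (probability $\frac{1-x_k}{1+x_k}$), obtaining an expected cost of $\frac{2}{1+x_k}$, whereas you compute the marginal probability $\Pr[a_i\in I_k]=\frac{x_k}{1+x_k}\approx x_k$ directly and double it by symmetry to get $2(1-x_k)$; these are the same quantity at first order, and the rest of the bookkeeping (share $\mathit{rs}_k e^{-\beta}N^2$ of tasks per processor, substitution of $x_k$, division by $LB$) is identical.
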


\begin{proof}

  During Phase 2, when a processor $P_k$ requests some work, a random
  task is sent among those that have not been processed yet. This task
  $T_{i,j}$ induces either the communication of one data block (if
  either $a_i$ or $b_j$ is already know at $P_k$) or 2 data blocks
  (but not 0 by construction).

  More precisely, since tasks are sent at random and since $P_k$ knows
  a fraction $x_k = \sqrt{\beta \mathit{rs}_k} (1 -\beta
  \mathit{rs}_k/4) $ of the elements of $a$ and $b$ at the end of
  Phase 1,
  \begin{itemize}
  \item a task induces the communication of one block with probability
    $\frac{2 x_k}{1+x_k}$,
  \item a task induces the communication of two blocks with
    probability $\frac{1- x_k}{1+x_k}$.
  \end{itemize}
  so that the expected number of communications per task for $P_k$
  is
  $$ \frac{2 x_k}{1+x_k} \times 1 + \frac{1- x_k}{1+x_k} \times 2
  =\frac{2}{1+x_k}.$$
 Moreover, since Phase 2 starts at the same
  instant on all processors and since processors are continuously
  processing tasks, $P_k$ processes a fraction $\mathit{rs}_k$ of the
  $e^{-\beta} N^2$ remaining tasks. The overall communication cost
  induced by Phase 2 is therefore given (on expectation and at first
  order) by
  $$ {\cal V}_{\textsc{Phase2}} =  e^{-\beta} N^2 \left( 1-\sqrt{\beta}  \sum_k \mathit{rs}_k^{3/2}\right),$$
  which achieves the proof of Lemma~\ref{lemP2}.
\end{proof}

\begin{theorem}
\label{mainthouter}
The ratio of the overall volume of communications to the lower bound if we switch between both phases when $e^{-\beta} N^2 $ tasks remain to be processed is given by 
$$\sqrt{\beta}  + \frac{\beta^{3/2} \sum_k \mathit{rs}_k^{3/2}}{ 4 \sum_k \mathit{rs}_k^{1/2}}  + e^{-\beta} N^2 \frac{1-\sqrt{\beta}  \sum_k \mathit{rs}_k^{3/2}}{\sum_k \mathit{rs}_k^{1/2}}.$$

\end{theorem}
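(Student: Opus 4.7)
The plan is to additively decompose the total communication volume as ${\cal V} = {\cal V}_{\textsc{Phase1}} + {\cal V}_{\textsc{Phase2}}$, and apply Lemmas~\ref{lemP1} and~\ref{lemP2} respectively. Before doing so, I would first verify that the switching criterion stated in the theorem — namely, transitioning when $e^{-\beta} N^2$ tasks remain — is consistent with the switching point analyzed in Lemma~\ref{lembeta}. Concretely, Lemma~\ref{lembeta} shows that if each $P_k$ switches when $x_k^2 = \beta \mathit{rs}_k - \tfrac{\beta^2}{2} \mathit{rs}_k^2$, then all processors reach this state simultaneously (up to $o(\mathit{rs}_k)$) at time $T = \frac{N^2}{\sum_i s_i}(1-e^{-\beta})$. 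At that moment, the number of tasks already processed across the whole platform equals $T \sum_i s_i = N^2(1-e^{-\beta})$, so exactly $e^{-\beta} N^2$ tasks remain. This justifies treating the two phases as cleanly separated in time, without boundary overlap.

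With this correspondence established, the proof reduces to summing the two bounds. Lemma~\ref{lemP1} provides the first two terms of the stated ratio, obtained from $\sum_k \sqrt{\beta \mathit{rs}_k}(1-\beta \mathit{rs}_k/4) N$ divided by $LB$. Lemma~\ref{lemP2} provides the third term, since at the start of Phase~2 each processor $P_k$ handles a fraction $\mathit{rs}_k$ of the $e^{-\beta} N^2$ remaining tasks, and the expected per-task communication cost $\frac{2}{1+x_k}$ expanded at first order in $\sqrt{\mathit{rs}_k}$ yields the factor $1 - \sqrt{\beta} \sum_k \mathit{rs}_k^{3/2}$ in the numerator. Adding the two contributions and dividing by $LB = 2N \sum_k \sqrt{\mathit{rs}_k}$ directly gives the expression claimed in the theorem.

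The only real obstacle is bookkeeping: ensuring that the first-order approximations used in Lemmas~\ref{lembeta},~\ref{lemP1}, and~\ref{lemP2} remain compatible when combined, and that the error terms, which are all $o(\mathit{rs}_k)$ with $\mathit{rs}_k$ of order $1/p$, do not accumulate to something non-negligible. Since every estimate is an expectation at first order in $1/p$ (valid for a reasonably large number of processors, as assumed throughout Section~\ref{outer_analysis}), the addition is legitimate and the theorem follows directly from the two preceding lemmas. No new analytical technique is required beyond what has already been developed; the statement is essentially a clean recapitulation of the phase-by-phase analysis.
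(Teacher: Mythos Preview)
Your proposal is correct and follows essentially the same approach as the paper: the paper states only that Theorem~\ref{mainthouter} is a direct consequence of Lemma~\ref{lemP1} and Lemma~\ref{lemP2}, and you spell out exactly that additive decomposition ${\cal V} = {\cal V}_{\textsc{Phase1}} + {\cal V}_{\textsc{Phase2}}$. Your additional verification that the switching criterion ``$e^{-\beta}N^2$ tasks remain'' is consistent with the time $T$ of Lemma~\ref{lembeta}, and your remark about the compatibility of the first-order error terms, are helpful elaborations that the paper leaves implicit.
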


Theorem~\ref{mainthouter} is a direct consequence of Lemma~\ref{lemP1}
and Lemma~\ref{lemP2}. Therefore, in order to minimize the overall
amount of communications, we numerically determine the value of
$\beta$ that minimizes the above expression and then switch between
Phases 1 and 2 when $e^{-\beta} N^2 $ tasks remain to be processed.

\subsection{Assessing the validity of the analysis through simulations}
\label{simuouter}

We have performed simulations to study the accuracy of the previous
theoretical analysis, that is a priori valid only for large values of
$p$ and $N/l$, and to show how it is helpful to compute the threshold
for \stupidthreshold. The simulations have been done using an ad-hoc
event based simulation tool, where processors request new tasks as
soon as they are available, and tasks are allocated based on the given
runtime dynamic strategy. Again, processor speeds are chosen uniformly
in the interval $[10, 100]$.  This degree of heterogeneity may seem
excessive but we show in Section~\ref{hetero-models} that using a
different heterogeneity model does not significantly impact the
results.  The communication amount of each strategy is normalized by
the lower bound computed in
Section~\ref{outer_analysis}. Figure~\ref{fig.simubasicP100} presents
the results for vectors of 100 blocks and
Figure~\ref{fig.simubasicP1000} does the same for vectors of 1000
blocks.

\begin{figure}[htbp]
  \centering
  \includegraphics[width=0.7\linewidth]{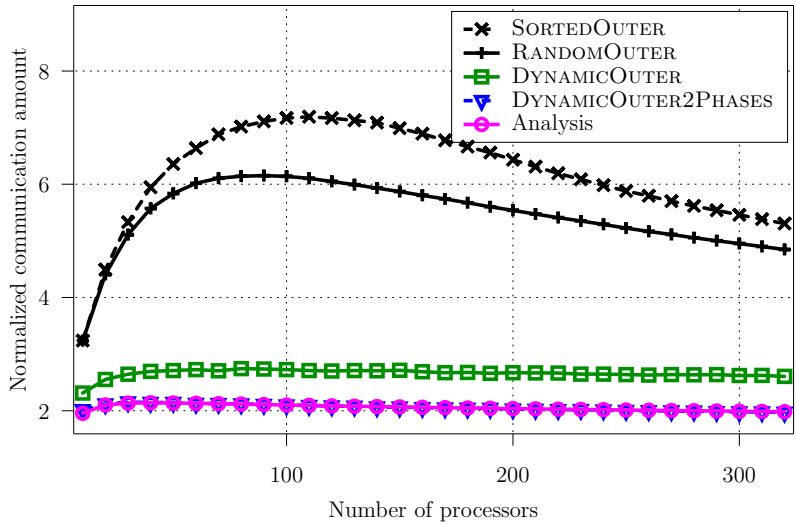}
  \caption{Communication amounts of all outer-product strategies for vectors of size $N/l=100$ blocks ($(N/l)^2$ tasks).}
  \label{fig.simubasicP100}
\end{figure}

\begin{figure}[htbp]
  \centering
  \includegraphics[width=0.7\linewidth]{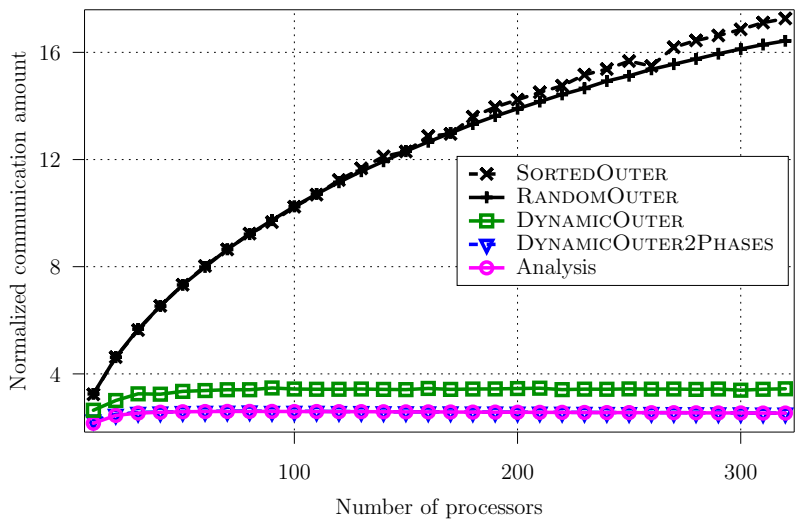}
  \caption{Communication amounts of all outer-product strategies for vectors of size $N/l=1000$ blocks ($(N/l)^2$ tasks).}
  \label{fig.simubasicP1000}
\end{figure}

In both figures, the analysis is extremely close to the performance of
\stupidthreshold (which makes them indistinguishable on the figures)
and proves that our analysis succeed to accurately model our dynamic
strategy, even for relatively small values of $p$ and $N/l$. Moreover, we can see in Figure~\ref{fig.simubasicP1000} that it
is even more crucial to use a data-aware dynamic scheduler when $N$ is
large, as the ratio between the communication amount of simple random
strategies (\simplerandom and \simplesorted) and dynamic data-aware
schedulers (such as \stupidthreshold) can be very large.

Our second objective is to show that the theoretical analysis that we propose can be used in order to accurately
compute the threshold of \stupidthreshold, \ie, that the $\beta$
parameter computed earlier is close to the best one. To do this, we
compare the communication amount of \stupidthreshold for various
values of the $\beta$ parameter. Figure~\ref{fig.beta} shows the
results for 20 processors and $N/l=100$. This is done for a single and
arbitrary distribution of computing speeds, as it would make no sense
to compute average values for different distributions since they would lead to
different optimal values of $\beta$. This explains the irregular
performance graph for \stupidthreshold.  This figure shows that in the
domain of interest, \ie for $3\leq \beta \leq 6$, the analysis
correctly fits to the simulations, and that the value of $\beta$ that
minimizes the analysis (here $\beta = 4.17$) lies in the interval of
$\beta$ values that minimize the communication amount of
\stupidthreshold. To compare to Figure~\ref{fig.threshold}, this
corresponds to 98.5\% of the tasks being processed in the first phase.

\begin{figure}[htbp]
  \centering
  \includegraphics[width=0.7\linewidth]{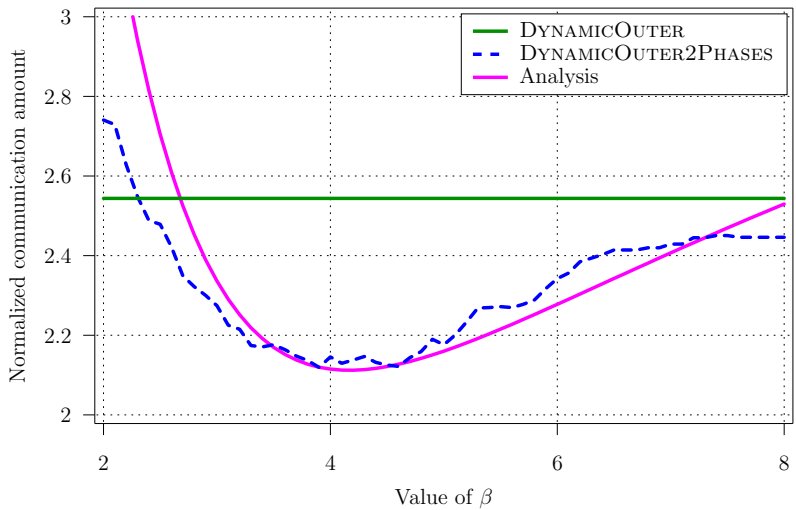}
  \caption{Communication amounts of \stupidthreshold and its analysis
    for varying value of the $\beta$ parameter which defines the
    threshold.}
  \label{fig.beta}
\end{figure}

\subsection{Impact of the heterogeneity}
\label{hetero-models}

The speed distribution used in the previous experiments (speeds taken
in the interval $[10,100]$) may seem too heterogeneous to  reasonably
model actual computing platforms, where heterogeneity comes either
from the use of a few classes of different processors (new and old
machines, processor equipped with accelerators or not, etc.) or from
the fact that machines are not dedicated, which implies stochastically
variable processor speed. It is natural to ask whether the speed
distribution impacts the ranking of the previous heuristics, or the
accuracy of our analysis.

\begin{figure}[htbp]
  \centering
  \includegraphics[width=0.7\linewidth]{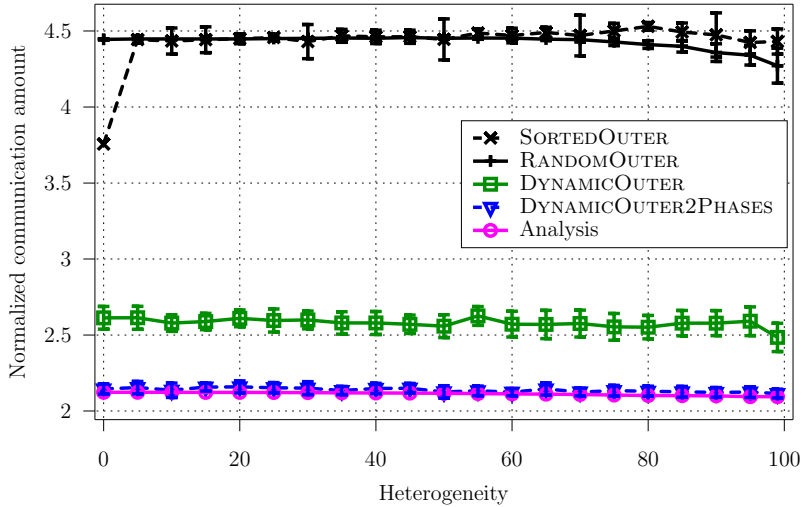}
  \caption{Behavior of the heuristics for outer product for different
    values of heterogeneity ($p=20$ processors and $N/l=100$ blocks). For a given value
    $h$ of heterogeneity, processor speeds are taken uniformly at
    random in the interval $[100-h, 100+h]$. }
  \label{fig.hetero}
\end{figure}

Figure~\ref{fig.hetero} presents the behavior of all previous
heuristics for a varying range of heterogeneity. A heterogeneity of 0
means perfectly homogeneous computing speeds, while a heterogeneity of
100 means that the ratio between the smallest and the largest speeds
is large. In this figure and the following one, error bars represents
the standard deviations with 50 tries. We notice that the
heterogeneity degree has very little impact on the relative amounts of
communication of the studied heuristics.

\begin{figure}[htbp]
  \centering
  \includegraphics[width=0.7\linewidth]{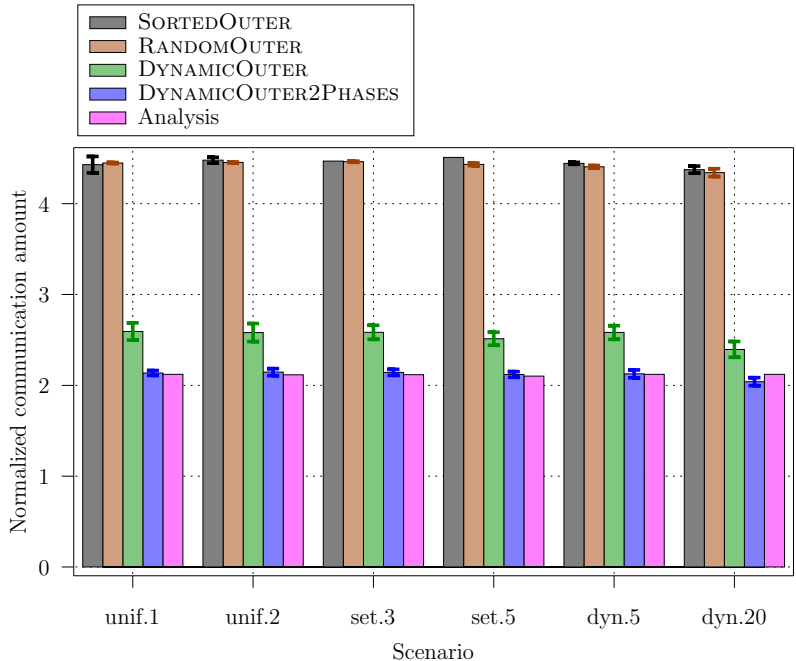}
  \caption{Behavior of the heuristics for outer product for different scenarios
    of heterogeneity ($p=20$ processors and $N/l=100$ blocks).}
  \label{fig.hetero2}
\end{figure}

In Figure~\ref{fig.hetero2}, we study the same heuristics using
different scenarios:
\begin{itemize}
\item Scenarios unif.1 and unif.2 corresponds to the previous setting,
  with speeds taken uniformly at random in intervals $[80,120]$
  (unif.1) and $[50,150]$ (unif.2).
\item Scenarios set.3 and set.5 corresponds to the case when there are
  a few classes of processors with different speed. The speeds are
  then taken uniformly from the set of possible speeds: (80, 100, 150)
  for set.3 or (40, 80, 100, 150, 200) for set.5.
\item Scenarios dyn.5 and dyn.20 corresponds to very simple dynamic
  settings. Each computing speed is first taken uniformly at random in
  interval $[80,120]$. Then, after computing a task, a processor sees
  its computing speed randomly changed by up to 5\% (dyn.5) or 20\%
  (dyn.20). 
\end{itemize}
This figure shows that neither the speed distribution nor the dynamic
evolution of the speeds notably affect the performance of the
heuristics.

\subsection{Runtime estimation of $\beta$}
\label{betaestimation}

In order to estimate the $\beta$ parameter in the \stupidthreshold
strategy, it seems necessary to know the processing speed, as $\beta$
depends on $\sum_k \sqrt{{s_k}/{\sum_i s_i}}$. However, we have
noticed a very small deviation of $\beta$ with the speeds. For
example, in Figure~\ref{fig.beta}, the value of $\beta$ computed when
assuming homogeneous speeds (4.1705) is very close to the one computed
for heterogeneous speeds (4.1679). 

For a large range of $N/l$ and $p$ values (namely, $p$ in $[10,1000]$
and $N/l \in [\max(10,\sqrt{p}), 1000]$), for processor speeds in
$[10,100]$, the optimal value for $\beta$ goes from 1 to 6.2. However,
for fixed values of $N/l$ and $p$, the deviations among the $\beta$
values obtained for different speed distributions is at most 0.045
(with 100 tries).  Our idea is to approximate $\beta$ with
$\beta_{\mathit{hom}}$ computed using a homogeneous platform with the
same number of processors and with the same matrix size. The relative
difference between $\beta_{\mathit{hom}}$ and the average $\beta$ of
the previous set is always smaller than 5\%. Moreover, the error on
the communication volume predicted by the analysis when using
homogeneous speeds instead of the actual ones is at most 0.1\%. These
figures are derived with the most heterogeneous speed distribution
(speeds in $[10,100]$) and thus hold for the other distributions of
Section~\ref{hetero-models} as well.

This proves that even if our previous analysis ends up with a formula
for $\beta$ that depends on the computing speeds, in practice, only
the knowledge of the matrix size and of the number of processors are
actually needed to define the threshold $\beta$. Our dynamic scheduler
\stupidthreshold is thus totally agnostic to processor speeds.

\section{Matrix Multiplication}
\label{matrix_mult}

We adapt here the previous dynamic algorithm and its theoretical
analysis to a more complex problem: the multiplication of two
matrices.


\subsection{Notations and dynamic strategies}

We first adapt the notations to the problem of computing the product
of two matrices $C = A B$. As in the previous section, we consider
that all transfers and computations are performed using blocks of size
$l\times l$, so that all three matrices are composed of $N^2/l^2$
blocks and $A_{i,j}$ denotes the block of $A$ on the $i$th row and the
$j$th column. The basic computation step is a task $T_{i,j,k}$, which
corresponds to the update $C_{i,j} \gets C_{i,j} + A_{i,k}
B_{k,j}$. To perform such a task, a processor has to receive the input
data from $A$ and $B$ (of size $2l^2$), and to send the result (of
size $(N/l)^2$) back to the master at the end of the
computation. Thus, it results in a total amount of communication of
$3~(N/l)^2$. As previously, in order to minimize the amount of
communications, our goal is to take advantage of the blocks of $A$,
$B$ and $C$ that have already been sent to a processor $P_u$ when
allocating a new task to $P_u$. Note that at the end of the
computation, all $C_{i,j}$s are sent back to the master that computes
in turn the final results by adding the different contributions. This
computational load is much smaller than computing the products
$T_{i,j,k}$ and we will neglect it in what follows.

As we assume that processors work during the whole process, the load
imbalance, \ie the difference between the amount of work processed by
$P_i$ and what it should have processed given its speed is at most one
block. Thus, a maximal block size $l$ can easily be derived from a
maximal load imbalance. The value of $l$ must also be large enough to
overlap communications of size $3 l^2$ with computations of size
$l^3$.  As usual, the block size should also be large enough to
benefit from BLAS effect and small enough so as to fit into caches. We
assume that the optimal block size $l$ is computed by the runtime
environment.
\medskip

The simple strategies \simplerandom and \simplesorted translate very
easily for matrix multiplication into the strategies \simplerandommat
and \simplesortedmat. We adapt the \stupid strategy into \stupidmat as
follows. We ensure that at each step, for each processor $P_u$ there
exist sets of indices $I$, $J$ and $K$ such that $P_u$ owns all values
$A_{i,k}$, $B_{k,j}$, $C_{i,j}$ for $i\in I$, $j\in J$ and $k\in K$,
so that it is able to compute all corresponding tasks
$T_{i,j,k}$. When a processor becomes idle, instead of sending a
single block of $A$, $B$ and $C$, we choose a tuple $(i,j,k)$ of new
indices (with $i\notin I$ , $j\notin J$ and $k\notin K$) and send to
$P_u$ all the data needed to extend the sets $I,J,K$ with
$(i,j,k)$. This corresponds to sending $3\times(2\card{I}+1)$ data blocks to
$P_u$ (note that $\card{I} = \card{J} = \card{K}$). In fact, blocks of
$C$ are not send by the master to the processor, but on the contrary
will be sent back to the master at the end of the computation; however,
this does not change the analysis since we are only interested in the
overall volume of communications. Then, processor $P_u$ is allocated
all the unprocessed tasks that can be done with the new
data. Algorithm~\ref{algo.stupidmat} details this strategy.

\begin{algorithm}[htbp]
  \DontPrintSemicolon
  \While{there are unprocessed tasks}{    
    Wait for a processor $P_u$ to finish its task\;
    $I \gets \{i\textnormal{~such that~}P_u\textnormal{~owns~}A_{i,k} \textnormal{~for some~} k\} $\;
    $J \gets \{i\textnormal{~such that~}P_u\textnormal{~owns~}B_{k,j} \textnormal{~for some~} k\} $\;
    $K \gets \{i\textnormal{~such that~}P_u\textnormal{~owns~}A_{i,k} \textnormal{~for some~} i\} $\;

    Choose $i\notin I$ , $j\notin J$ and $k\notin K$ uniformly at random\;
    Send the following data blocks to $P_u$:
    \begin{itemize}
    \item  $A_{i,k'}$ for $k'\in K \cup \{ k \}$ and $A_{i',k}$ for\\ $i' \in I \cup \{i\}$ 
    \item  $B_{k,j'}$ for $j'\in J \cup \{j\}$ and $B_{k',j}$ for\\ $k' \in K \cup \{k\}$
    \item  $C_{i,j'}$ for $j'\in J \cup \{j\}$ and $C_{i',j}$ for\\ $i' \in I \cup \{i\}$ 
    \end{itemize}

    Allocate all tasks $\{T_{i',j',k'}$with $i' = i $ or $j' = j$ or $k' = k \}$ that are not yet processed to $P_u$ and mark them processed\;
  }  
  \caption{\stupidmat strategy.}
  \label{algo.stupidmat}
\end{algorithm}

As in the case of the outer product, when the number of remaining
blocks to be processed becomes small,\simplerandommat strategy
outperforms the \stupidmat strategy. Therefore, we introduce the
intermediate \stupidthresholdmat strategy that consists into two
phases. During Phase 1, the \stupidmat strategy is used. Then, when
the number of remaining tasks becomes smaller than $e^{-\beta} N^3$
for a value of $\beta$ that is to be determined, we switch to Phase 2
and use strategy \simplerandommat. As in the case of the outer
product, the theoretical analysis proposed in the next section will
help us to determine the optimal value of $\beta$, \ie the instant
when to switch between phases in order to minimize the overall
communication volume in the \stupidthresholdmat strategy.

\subsection{Theoretical analysis of dynamic randomized strategies}

In this section, our aim is to provide an analytical model for
Algorithm \stupidthresholdmat similarly to what has been done for Algorithm \stupid in
Section~\ref{outer_analysis}. The analysis of both processes is in
fact rather similar, so that we will mostly state the corresponding
lemmas, the proofs being similar to those presented in
Section~\ref{outer_analysis}.

In what follows, we will assume that $N$, the size of matrices $A$, $B$
and $C$, is large and we will consider a continuous dynamic process
whose behavior is expected to be close to the one of \stupidthresholdmat. In
what follows, as in Section~\ref{outer_analysis}, we will concentrate
on processor $P_k$ whose speed is $s_k$ and relative speed $\mathit{rs}_k=\frac{s_k}{\sum_i s_i}$. We will also denote by $C=A
\times B$ the result of the matrix multiplication. Note that throughout this section,
$A_{i,k}$ denotes the \emph{element} of $A$ on the $i$th row and $j$th
column.

Let us assume that there exist 3 index sets $I,J$ and $K$ such that 
\begin{itemize}
\item $P_k$ knows all elements $A_{i,k}$, $B_{k,j}$ and $C_{i,j}$ for
  any $(i,j,k)\in I\times J \times K$.
\item $I, J$ and $K$ have size $y$.
\end{itemize}

In Algorithm \stupidthresholdmat, at each step, $P_k$ chooses to increase its
knowledge by increasing $y$ by $l$, which requires to receive $(2  y+1)l$
elements of each $A$, $B$ and $C$. As we did in
Section~\ref{outer_analysis}, we will concentrate on $x=y/N$, and
assuming that $N$ is large, we will change the discrete process into a
continuous process described by an ordinary differential equation
depicting the evolution of expected values and we will rely on
extensive simulations to assert that this approximation is valid.

In this context, let us consider that an elementary task $T(i,j,k)$
consists in computing $C_{i,j} \gets C_{i,j} +A_{i,k} B_{k,j}$. There are $N^3$ such
tasks. In what follows, we will denote by $g_k(x)$ the fraction of
elementary tasks that have not been computed yet at the instant when
$P_k$ knows $x^2$ elements of $A, B$ and $C$ respectively, in the
computational domain that does not include the tasks $T(i,j,k)$ such
that $(i,j,k)\in I\times J \times K$ (this domain is equivalent to the
``L''-shaped area for the outer product in Section~\ref{outer_analysis}). The following lemma enables to
understand the dynamics of $g_k$ (all proofs are omitted because they
are very similar to those of Section~\ref{outer_analysis}).

\begin{lemma}
\label{lemgm}
$g_k(x)=(1-x^3)^{\alpha_k}$, where $\alpha_k = \frac{\sum_{i \neq k} s_i}{s_k}$.
\end{lemma}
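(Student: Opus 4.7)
The plan is to mirror the derivation of Lemma~\ref{lemg} almost line for line, replacing the 2D ``L''-shaped area by its 3D analogue and adjusting the boundary-growth terms from first-order to second-order in $x$.

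First, I would set up the geometry from $P_k$'s perspective. When $|I|=|J|=|K|=xN$, processor $P_k$ can compute exactly the cube $I\times J \times K$ of tasks $T_{i,j,k}$, a region of volume $x^3 N^3$. The role played by the ``L''-shape in Section~\ref{outer_analysis} is now played by its 3D complement of volume $(1-x^3)N^3$, and $g_k(x)$ is, by definition, the fraction of tasks in this complement that remain unprocessed (by any processor). I will again invoke the uniformity assumption that these unprocessed tasks are spread evenly through the complement, which is the same simplification used for the outer product and whose empirical validity is justified experimentally.

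Next, I would analyze an infinitesimal step in which $|I|, |J|, |K|$ each grow by $\dx N$. At first order the newly accessible volume is $(x+\dx)^3 N^3 - x^3 N^3 \simeq 3x^2 \dx N^3$, so by the uniformity assumption the expected number of these tasks that are still unprocessed is $3 x^2 \dx\, g_k(x) N^3$; these are the ones $P_k$ actually executes during the step. This pins the duration of the step at $t_k(x+\dx)-t_k(x) = 3 x^2 \dx\, g_k(x) N^3 / s_k$, during which the other processors collectively consume $3 x^2 \dx\, g_k(x) N^3 \alpha_k$ tasks from $P_k$'s remaining L-region (overlap of computation and communication letting them work throughout).

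Finally, I would write the conservation equation for the unprocessed tasks in the L-region,
\[
g_k(x+\dx)\bigl(1-(x+\dx)^3\bigr)N^3 = g_k(x)(1-x^3)N^3 - 3 x^2 \dx\, g_k(x)N^3 - 3 x^2 \dx\, g_k(x)N^3\,\alpha_k,
\]
expand $(x+\dx)^3$ to first order, cancel the $3x^2\dx\, g_k(x)N^3$ term coming from the shrinking L-region against the one coming from $P_k$'s own work, and obtain
\[
\frac{g_k'(x)}{g_k(x)} = -\frac{3 x^2 \alpha_k}{1-x^3}.
\]
Integrating and using $g_k(0)=1$ yields $g_k(x)=(1-x^3)^{\alpha_k}$, as claimed. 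The only substantive subtlety is the uniformity-of-unprocessed-tasks hypothesis (needed twice: to estimate both the $P_k$-side work and the competitor-side work), but since it is assumed in the same way as for Lemma~\ref{lemg}, there is no real obstacle and the proof reduces to the straightforward bookkeeping above.
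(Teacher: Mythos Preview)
Your proposal is correct and is exactly the argument the paper has in mind: the authors explicitly omit the proof of Lemma~\ref{lemgm} ``because they are very similar to those of Section~\ref{outer_analysis}'', and what you have written is precisely the 3D transcription of the Lemma~\ref{lemg} derivation, replacing the $2x\,\dx$ boundary term by $3x^{2}\,\dx$ and the area $1-x^{2}$ by the volume $1-x^{3}$. There is nothing to add.
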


Let us now denote by $t_k(x)$ the time step such that index sets $I$, $J$ and $K$ have size $x$.
Then,
\begin{lemma}
\label{lemTm}
$t_k(x) \sum_i s_i =1- N^2 (1- (1-x^3)^{\alpha_k+1})$.
\end{lemma}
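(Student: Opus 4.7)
The plan is to mirror the proof of Lemma~\ref{lemT} step by step, with the outer-product ``L''-shaped boundary replaced by its 3D analog for matrix multiplication. As before, I will introduce an auxiliary quantity $h_k(x)$ counting the number of tasks $T_{i,j,k}$ with $(i,j,k)\in I\times J\times K$ (i.e.\ tasks that $P_k$ could have processed from its current knowledge) that were actually computed by some processor $P_i$, $i\neq k$. Then the identity ``tasks $P_k$ could have done = tasks $P_k$ did do $+$ tasks other processors did do in $P_k$'s region'' reads
$$x^3 N^3 \;=\; t_k(x)\, s_k \;+\; h_k(x),$$
since the volume of $I\times J\times K$ is $x^3 N^3$ and $P_k$ works at rate $s_k$ continuously throughout the first phase.

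The next step is to derive an ODE for $h_k$. When the common index-set size grows from $y$ to $y+l$ (equivalently $x$ to $x+\dx$), three new ``slabs'' are added to $I\times J\times K$ --- one for each new index in $I$, $J$, $K$ --- each of volume $x^2\dx\,N^3$ at first order (the pairwise and triple intersections are of higher order in $\dx$ and are discarded). Thus the newly reachable tasks number $3x^2 \dx\, N^3$, of which by definition of $g_k$ (Lemma~\ref{lemgm}) a fraction $1-g_k(x) = 1-(1-x^3)^{\alpha_k}$ has already been processed elsewhere. This gives
$$h_k(x+\dx) \;=\; h_k(x) \;+\; 3x^2\dx\,\bigl(1-(1-x^3)^{\alpha_k}\bigr)\,N^3,$$
i.e.\ $h_k'(x) = N^3\bigl(3x^2 - 3x^2(1-x^3)^{\alpha_k}\bigr)$.

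Integrating, and noting that $\int 3x^2(1-x^3)^{\alpha_k}\,dx = -\tfrac{(1-x^3)^{\alpha_k+1}}{\alpha_k+1}$, yields
$$h_k(x) \;=\; N^3\left(x^3 + \frac{(1-x^3)^{\alpha_k+1}}{\alpha_k+1} + K\right),$$
and the initial condition $h_k(0)=0$ fixes $K = -\tfrac{1}{\alpha_k+1}$. Plugging this into $t_k(x)\,s_k = x^3 N^3 - h_k(x)$ and using the identity $\alpha_k + 1 = \sum_i s_i / s_k$ gives
$$t_k(x)\,\sum_i s_i \;=\; N^3\bigl(1-(1-x^3)^{\alpha_k+1}\bigr),$$
which is the claimed formula (the factor $N^2$ in the statement should read $N^3$, consistently with the total number of tasks being $N^3$ and with Lemma~\ref{lemT}).

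The main obstacle is getting the combinatorics of the boundary right: in the outer product, increasing $x$ by $\dx$ adds a row and a column of area $2x\dx\,N^2$, whereas here it adds three slabs of volume $3x^2\dx\,N^3$. Once this first-order count is justified (by observing that the pairwise overlaps of slabs have volume $O(x\,\dx^2 N^3)$ and the triple overlap is $O(\dx^3 N^3)$, both negligible), the rest is a direct transcription of the outer-product proof.
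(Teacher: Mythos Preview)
Your proof is correct and follows exactly the approach the paper intends (the paper omits the proof, stating it is ``very similar'' to that of Lemma~\ref{lemT}, and your transcription of that argument to the 3D setting is accurate). Note that in addition to the $N^2 \to N^3$ typo you flag, the stray ``$1-$'' at the start of the right-hand side in the stated lemma is also spurious; the correct formula is indeed $t_k(x)\sum_i s_i = N^3\bigl(1-(1-x^3)^{\alpha_k+1}\bigr)$, as you derive.
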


Above equations well describe the dynamics of \stupidthresholdmat as long as it
is possible to find elements of $A$, $B$ and $C$ that enable to
compute enough unprocessed elementary tasks. On the other hand, as in
the case of \stupidthreshold, at the end, it is better to switch to another
algorithm, where unprocessed elementary tasks $T(i,j,k)$ are
picked up randomly, what requires possibly to send all three values of
$A_{i,k}$, $B_{k,j}$ and $C_{i,j}$. In order to decide when to switch
from one strategy to the other, let us introduce the additional
parameter $\beta$.

As in the outer-product problem, a lower bound on the communication
volume received by $P_k$ can be obtained by considering that each
processor has a cube of tasks $T_{i,j,k}$ to compute, proportional to
its relative speed. The edge-size of this cube is thus $N \sqrt[3]{\mathit{rs}_k}$.
To compute all tasks in this cube, $P_k$ needs to receive a square of
each matrix, that is $3 N^2 \mathit{rs}_k^{2/3}$. 

In order to determine when we should switch between Phase 1 and Phase 2, we
can observe that if $x_k^3= \beta \mathit{rs}_k - \beta^2/2
\mathit{rs}_k^2$, then 
$$
t_k(x_k) \sum_i s_i = N^2 ( 1 -e^{-\beta}(1+o(\mathit{rs}_k))),
$$
so that at first order, $t_k(x_k)$ is independent of $k$. The instant
$t=\frac{N^2}{\sum_i s_i} ( 1 - e^{-\beta})$ is therefore chosen to
switch between Phases 1 and 2.

As in the context of the outer product, we need to find the value of $\beta$ that minimizes the volume of communications. If the switch occurs at time $t=\frac{N^2}{\sum_i s_i} ( 1 - e^{-\beta})$, then
\begin{itemize}
\item the volume of communications during Phase 1 is given by $$3 N^2 \beta^{2/3} \sum_k \mathit{rs}_k^{2/3} - 3 N^2 \beta^{5/3} \sum_k \mathit{rs}_k^{5/3},$$
\item the volume of communications during Phase 2 is given by $$ e^{-\beta} N^3 \left(1 - \beta^{2/3} \sum_k \mathit{rs}_k^{5/3}\right),$$
\end{itemize}
so that the total amount of communications with respect to the lower bound $3 N^2 \sum \mathit{rs}_k^{2/3}$ is given by
$$ \beta^{2/3} - \beta^{5/3} \frac{\sum_k \mathit{rs}_k^{5/3}}{\sum_k \mathit{rs}_k^{2/3}} +  \frac{e^{-\beta} N}{\sum_k \mathit{rs}_k^{5/3}} \left(1 - \beta^{2/3} \sum_k \mathit{rs}_k^{5/3}\right).$$


\subsection{Simulation Results}


We have conducted extensive simulations to compare the performance of
the dynamic strategies with the previous analysis.
Figure~\ref{fig.simumat40} presents the results for matrices of size
40x40 and Figure~\ref{fig.simumat100} presents the results for matrices of size
100x100. As in previous simulations, processor speeds are chosen
uniformly at random in the interval $[10, 100]$ and all amounts of
communications have been normalized using the lower bound $3 N^2 \sum_k \mathit{rs}_k^{2/3}$ on
communications presented in the previous section.


\begin{figure*}[htbp]
  \centering
  \includegraphics[width=0.7\linewidth]{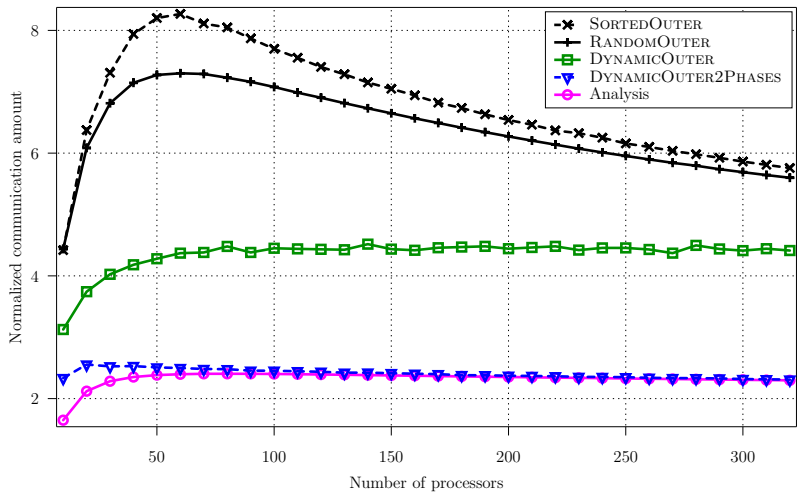}
  \caption{Communication amounts of all strategies for matrices of size $N/l=40$
    blocks ($N^3/l^3 = 64,000$ tasks).}
  \label{fig.simumat40}
\end{figure*}

\begin{figure*}[htbp]
  \centering
  \includegraphics[width=0.7\linewidth]{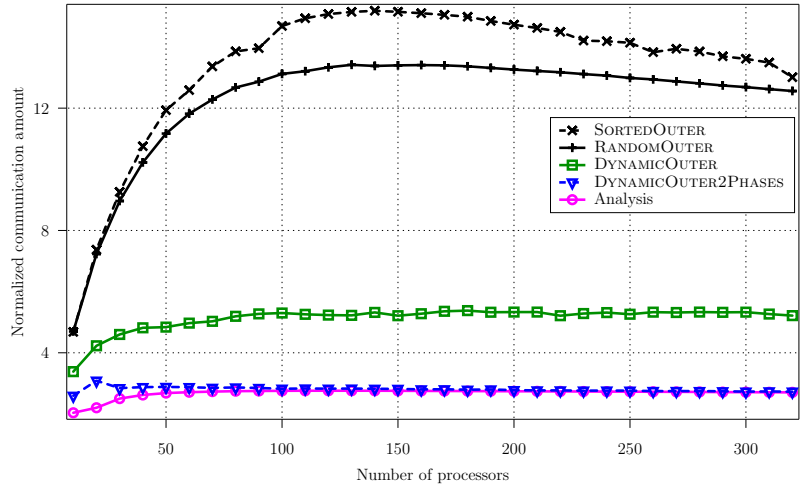}
  \caption{Communication amounts of all strategies for matrices of size $N/l=100$
    blocks ($N^3/l^3 = 1,000,000$ tasks).}
  \label{fig.simumat100}
\end{figure*}

As for the outer-product problem, we notice that data-aware strategies
largely outperform simple strategies, and that \stupidthresholdmat is
able to reduce the communication amount even more than \stupidmat. When
the number of processors is large enough (\ie in our simulation
setting, $p\geq 50$), our previous analysis is able to very accurately
predict the performance of \stupidthresholdmat.

\begin{figure}[htbp]
  \centering
  \includegraphics[width=0.7\linewidth]{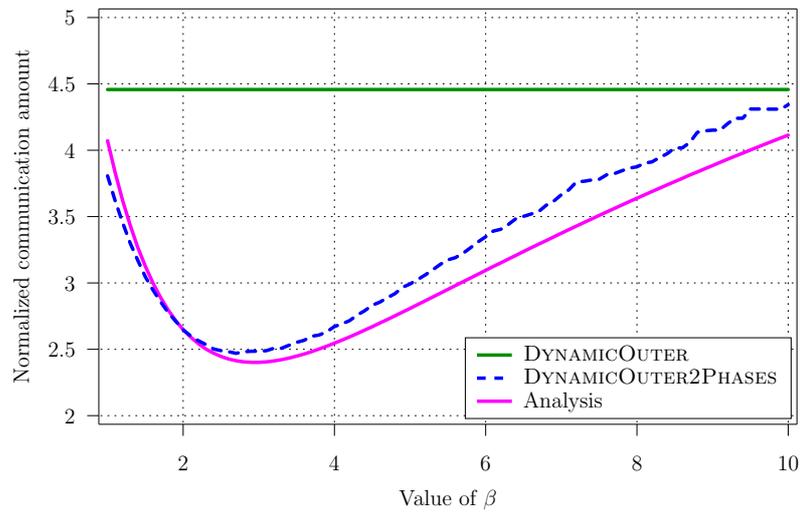}
  \caption{Communication amount of \stupidthresholdmat and its
    analysis for varying value of the $\beta$ parameter which defines
    the threshold.}
  \label{fig.betamat}
\end{figure}

We also performed simulations of \stupidthresholdmat with varying values
of $\beta$ to check if the optimal value determined in the theoretical analysis actually
minimizes the amount of communications. This is illustrated in
Figure~\ref{fig.betamat}, for 100 processors, $N/l=40$ and a fixed
distribution of computing speeds. As for the outer product, we notice
that the analysis accurately models the amount of communications of
\stupidthresholdmat in the range of values of interest of $\beta$, and
that the optimal value of $\beta$ for the analysis (2.95) allows to
obtain an amount of communications that is close to optimal. This
corresponds to 94.7\% of the tasks to be processed by the first phase
of the algorithm.  As for the outer product, we also notice that the
value of $\beta$ given by an analysis which is agnostic to processor
speeds and assumes homogeneous speeds is very close to the optimal
value (2.92 on this example).

\section{Conclusion and perspectives}
\label{conclusion}

The contributions of this paper follow two directions. First, we have
proposed randomized dynamic scheduling strategies for the outer
product and the matrix multiplication kernels. We have proved that
dynamic scheduling strategies that aim to place tasks on processors
such that the induced amount of communications is as small as possible
perform well. Second, we have been able to propose an Ordinary
Differential Equation (ODE) whose solution describes very well the dynamics
of the system. Even more important, we prove that the analysis of the
dynamics of the ODE can be used in order to tune parameters and to inject
some static knowledge which is useful to increase the efficiency of
dynamic strategies.

A lot remains to be done in this domain, that we consider as crucial
given the practical and growing importance of dynamic runtime
schedulers. First, it would be of interest to be able to provide
analytical models for a larger class of dynamic schedulers even in the
case of independent tasks, and to analyze their behavior also in
dynamic environments (when the performance of the resources is unknown
and varies over time). Then, it would be very useful to extend the
analysis to applications involving both data and precedence
dependencies. Extending this work to regular dense linear algebra
kernels such as Cholesky or QR factorizations would be a promising first step in
this direction.


\section{Acknowledgement}

This work has been partially supported by the {\em ANR SOLHAR}
project, funded by the French Research Agency.


\end{document}